\documentclass[12pt]{article}
\pagestyle{plain}
\setlength{\textwidth}{5.6in}
\oddsidemargin=10mm 
\evensidemargin=0mm
\topmargin=0.5in
\headheight=0mm
\textheight=220mm
\headsep=0mm
\pagestyle{empty}

\usepackage{mathptmx}
\usepackage{amssymb}
\usepackage{amsthm}
\usepackage{amsmath}
\usepackage{lscape}
\newtheorem{theorem}{Theorem}

\usepackage{float}
\usepackage{wrapfig}
\usepackage{natbib}
\usepackage{ascmac}
\usepackage{amsmath,amssymb}
\usepackage{amsthm}

\usepackage{latexsym}
\usepackage{subfigure}
\usepackage{bm}

\usepackage{graphicx}

\def\newblock{\hskip .11em plus .33em minus .07em}

\usepackage{multirow}
\usepackage{bigstrut}
\usepackage{endnotes}
\usepackage{url} 

\begin{document}
\setlength{\abovedisplayskip}{4pt}
\setlength{\belowdisplayskip}{4pt}
\setlength{\baselineskip}{7.1mm}
\title{Identification of causal direct-indirect effects \\ without untestable assumptions
\\
}

\date{}
\maketitle

\vspace{-1.5cm}

\begin{center}

{\large {\bf Takahiro Hoshino } \\
}
\end{center}

\vspace{1cm}

\thispagestyle{empty}

TAKAHIRO HOSHINO \\ Department of Economics, Keio University \\ 
RIKEN Center for Advanced Intelligence Project \\
E-mail: bayesian@jasmine.ocn.ne.jp



\begin{center} ABSTRACT \end{center}

In causal mediation analysis, identification of existing causal direct or indirect effects requires untestable assumptions in which potential outcomes and potential mediators are independent.
This paper defines a new causal direct and indirect effect that does not require the untestable assumptions.
We show that the proposed measure is identifiable from the observed data, even if potential outcomes and potential mediators are dependent, while the existing natural direct or indirect effects may find a pseudo-indirect effect when the untestable assumptions are violated.

\vspace{10mm}
Keywords: causal effect, causal mediation, mediation analysis, natural indirect effect, potential outcome,


\newpage

\pagestyle{plain}
\setcounter{page}{1}

\section{Introduction}

In recent years, there has been considerable methodological development and applied studies based on potential outcome approaches \citep{Rubin:jep1974} to causal mediation analysis to understand causal mechanisms (for example, \citealt{Pearl:2001,VanderWeele:2009,Imai:2010a,Tchetgen:2012,Ding:2016,Miles:2020}).
Let $T$ denote the exposure or treatment of interest, $Y$ the outcome, $M$ the mediator, and the baseline covariates $v$, which are not affected by the exposure and mediator.
Following the potential outcome approach, let $Y_{j(m)}$ be the potential outcome when $T=j$ and $M=m$.

Most recent studies in causal mediation analysis consider the natural direct/indirect effects, which are defined using the expectation of the ``never-observed'' outcome (not ``potential outcome'') $Y_{j}(M_k) \ \ (j \neq k)$, which is the potential outcome under treatment $j$ for the potential mediator for treatment $k$, $M_k$.
To identify these effects various assumptions are proposed. The following assumptions \citep{Pearl:2001} are often made:
\begin{description}
\item[Assumption 1] $Y_{j} (k) \perp\!\!\!\perp T  | v, \ \forall j,k,$
\item[Assumption 2] $M_{j} \perp\!\!\!\perp T  | v, \ \forall j,$
\item[Assumption 3] $Y_{j} (k) \perp\!\!\!\perp M| T, \ v, \ \forall j,k,$
\item[Assumption 4] $Y_{j} (k) \perp\!\!\!\perp M_{j^*} |v, \ \forall j,j^*,k,$
\end{description}
Another example for sufficient conditions for identifying the two natural effects includes the following sequential ignorability conditions (SI1 and SI2) by \citet{Imai:2010b}:
\begin{description}
\item[Assumption SI1] $\{Y_{j} (k), M_{j^*} \} \perp\!\!\!\perp T  | v, \ \forall j,j^*,k,$
\item[Assumption SI2] $Y_{j} (k) \perp\!\!\!\perp M_{j^*} | T=j^*, v, \ \forall j,j^*,k,$
\end{description}
For the relationship between these sets of conditions, see \citet{Pearl:2014}.

As has already been pointed out by various studies, Assumptions 1 and 2 or assumption SI1 are satisfied if $T$ is randomized. Assumption 4 or SI2 is not testable in that this states independence of potential outcomes and potential mediators, some of which we never observe simultaneously.
These assumptions do not hold even if both $T$ and $M$ are randomized or ignorable given $v$ \citep{Pearl:2014}, while Assumption 3 holds.

This paper defines new causal mediation effects that are identifiable from observed data without the untastable assumptions when both $T$ and $M$ are randomized or ignorable given $v$. The proposed ones are useful even if the ranzomization for the mediator is not possible in that the assumptions required for identification are weaker than those for the traditional estimands, natural direct/indirect effects.

The proposed direct and indirect effects will have the following properties:
\begin{description}
\item[(a)] indirect effect will be zero if $M_0=M_1$ for all units.
\item[(b)] these effects are identifiable without untestable Assumption 4 or SI2.
\item[(c)] the defined effects are expressed as the potential outcomes $Y_j(m)$ and the potential mediators $M_j$, not $Y_j(M_k)$. Thus, the causal interpretation is straightforward.
\end{description}

\section{Notation and existing estimand}

Without loss of generality, we consider binary treatment in this paper (for multi-valued treatment, we can generalise the result using similar arguments to those by \citealt{Imbens:2000}).
Using two potential mediators $M_1$ (for $T=1$ treatment) and $M_0$ (for $T=0$ treatment), $M$ is expressed as
\begin{equation}
\label{eqm}
M= T M_1 + (1-T) M_0.
\end{equation}
We assume that $M$ is a categorical variable (i.e., $M=0, \cdots , M^*$). 
Let $M(m)$ be the binary indicator such that $M(m)=1$ if $M=m$. Similarly, let $M_{j}(m)$ be the binary indicator under $T=j$ treatment, $M_{j}(m)=1$ if the potential mediator under treatment $j$, $M_{j}$, is $m$.

The observed outcome $Y$ is expressed by the potential outcomes, potential mediators, and treatment indicator as follows:
\begin{equation}
\label{eq1}
Y = \sum_{m=0}^{M^*} \Bigl[ T M(m) Y_{1}(m) + (1-T) M(m) Y_{0}(m) \Bigr] = \sum_{m=0}^{M^*} \Bigl[ T M_{1}(m) Y_{1}(m) + (1-T) M_{0}(m) Y_{0}(m) \Bigr]
\end{equation}
The observed outcome can be expressed by two potential outcomes $Y_1=Y_1(M_1)$ (for $T=1$ treatment) and $Y_0(M_0)$ (for $T=0$ treatment) as follows:
\begin{equation}
Y= T Y_1 + (1-T) Y_0 = T Y_1(M_1) + (1-T) Y_0(M_0)
\end{equation}
under the composition assumption \citep{Pearl:2009}.

The average treatment effect ($ATE$) is defined as the expectation of the difference between two potential outcomes:
\begin{equation}
\label{eq2}
ATE \equiv E[Y_1- Y_0], 
\end{equation}

A straightforward way of defining the direct effect is to set the mediator to a pre-specified level $M=m$. \citet{Pearl:2001} defined the controlled direct effect with mediator fixed at $M=m$, $CDE(m)$, in which the mediator is set to $m$ uniformly over the entire population:
\begin{equation}
ICDE(m) \equiv Y_{1}(m) - Y_{0}(m), \ \ \ CDE(m) \equiv E[ICDE(m)].
\end{equation}
where $ICDE(m)$ is an unit-level version of the $CDE(m)$ that we define here to use later in this paper. As pointed out in existing studies, the quantity defined as $ATE-CDE$ is not a proper measure of indirect effect in that this quantity may not be zero even when $M_0=M_1$ for all units \citep{VanderWeele:2009}

In the literature on causal mediation analysis \citep{Robins and Greenland:1992,Pearl:2001,Pearl:2009}, ATE is expressed as the sum of the natural direct effect ($NDE$) and the natural indirect effect ($NIE$), instead of using $CDE$. Following \citet{Imai:2010b},
\begin{align}
\label{eq3}
NDE(t) & \equiv  E[Y_{1}(M_t) - Y_{0}(M_t)], \ \ \ NIE(t) \equiv  E[Y_{t}(M_1) - Y_{t}(M_0)], \ (t=0,1)  \notag \\
NDE & \equiv \frac{1}{2} [NDE(1) + NDE(0) ] \ \ \ NIE \equiv \frac{1}{2} [NIE(1) + NIE(0) ]  \notag \\
ATE & = E[Y_{1} - Y_{0}] = E[Y_{1}(M_1) - Y_{0}(M_0)] = NDE + NIE.
\end{align}

Note that the NDE and NIE are not identified without further assumptions because quantity $Y_{1}(M_0)$ is not observable.
For identification, the existing studies assume independence between potential outcomes and mediator (given some observable covariates), or related conditions such as sequential ignorability. In Section 5, we show that NIE may be biased in that under no mediation, NIE is not zero when the assumption is violated while the proposed one is not.

\section{Definition of weighted direct effect and estimable indirect effect}

Identification of NDE and IDE requires assumption 4 or SI2 because the causal mediation effect is defined by using the ``never-observable''outcome (not ``potentially observable'' outcome) $Y_{j}(M_k) \ \ (j \neq k)$. 
However, $Y_{j}(k)$ and $M_{j}$ are observed for some portion of the units.

We redefine the potential outcomes $Y_{j}(M_k)$ by using the functions of potential outcomes and mediators as follows:
\begin{align}
\label{eq4}
Y_{j}(M_k)  \equiv \sum_{m=0}^{M^*} M_k(m) Y_{j}(m).
\end{align}
Under this definition, $Y_{j}(M_k) = Y_{j}(m)$ when $M_k(m)=1$.

By using this expression, ATE is determined as follows:
\begin{align}
ATE = E[Y_1(M_{1})- Y_0(M_{0})]=E[ \sum_{m=0}^{M^*} \{ M_1(m) Y_{1}(m) -  M_0(m) Y_{0}(m) \} ]
\end{align}
Then, we defined the {\bf weighted controlled direct effect } ($WCDE$)
\begin{align}
WCDE & \equiv E[ \sum_{m=0}^{M^*} \{ M(m) Y_{1}(m) -  M(m) Y_{0}(m) \} ] \notag \\
& = E[ \sum_{m=0}^{M^*} M(m) ( Y_{1}(m) -  Y_{0}(m) ) ] = E[ \sum_{m=0}^{M^*} M(m) ICDE(m) ].
\end{align}

Note that in $CDE(m)$, the mediator is set to be the specific value, $M=m$, while $WCDE$ is the weighted average of $ICDE(m)$ over the observed distribution of $M$.

The {\bf implied indirect effect} ($IIE$) is expressed as:
\begin{align}
IIE & \equiv ATE - WCDE  \notag \\
& = E[ \sum_{m=0}^{M^*} [(M_{1}(m)-M(m))Y_{1}(m) +  (M(m)-M_{0}(m))Y_{0}(m)]  ]
\end{align}

While the quantity defined as $ATE-CDE$ may not be zero even when $M_0=M_1$ for all units, $IIE$ is always zero if $M_{1}=M_{0}$ for all units.
\begin{theorem}
\label{theorem1}
$IIE$ is equivalent to zero if $M_{1}=M_{0}$ for all units.
\end{theorem}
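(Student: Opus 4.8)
The plan is to reduce the claim to a pointwise (unit-level) identity inside the expectation defining $IIE$, and then show that every summand vanishes whenever $M_1 = M_0$. First I would record the key relationship between the observed mediator indicator and the two potential mediator indicators. From the decomposition $M = T M_1 + (1-T) M_0$ and the fact that $M(m)$ is the indicator of the event $\{M = m\}$, we obtain $M(m) = T M_1(m) + (1-T) M_0(m)$: when $T=1$ the realized mediator equals $M_1$, so $M(m) = M_1(m)$, and when $T=0$ it equals $M_0$, so $M(m) = M_0(m)$.

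The second step invokes the hypothesis. If $M_1 = M_0$ for a given unit, then the associated indicators agree, i.e. $M_1(m) = M_0(m)$ for every level $m$. Substituting this into the identity just derived gives $M(m) = T M_1(m) + (1-T) M_1(m) = M_1(m) = M_0(m)$, so the observed indicator coincides with both potential indicators regardless of the realized value of $T$.

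Third, I would plug these equalities into the integrand of $IIE$. Both weight differences collapse: $M_1(m) - M(m) = 0$ and $M(m) - M_0(m) = 0$ for every $m$. Hence the bracketed quantity $\sum_{m=0}^{M^*} [(M_1(m)-M(m))Y_1(m) + (M(m)-M_0(m))Y_0(m)]$ is identically zero for that unit, irrespective of the values of $Y_1(m)$ and $Y_0(m)$. Since $M_1 = M_0$ is assumed to hold for all units, taking the expectation over the population gives $IIE = E[0] = 0$.

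There is no serious obstacle here; the argument is a direct substitution. The only point deserving care is the first step — confirming that $M(m) = T M_1(m) + (1-T) M_0(m)$ — since this is exactly what forces the two weight differences to cancel. It is worth emphasizing that the cancellation is pointwise and requires neither Assumption 4 nor SI2, nor any relationship between the potential outcomes and the potential mediators; this is precisely the sense in which $IIE$ behaves correctly under no mediation.
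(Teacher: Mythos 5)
Your proof is correct and follows essentially the same route as the paper's: the paper likewise invokes Equation (\ref{eqm}), $M = T M_1 + (1-T) M_0$, to conclude that $M_1 = M_0$ forces $M = M_1 = M_0$, making every summand in $IIE$ vanish pointwise. Your write-up simply makes explicit the indicator-level identity $M(m) = T M_1(m) + (1-T) M_0(m)$ that the paper leaves implicit.
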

\begin{proof}[of Theorem~\ref{theorem1}]
From Equation \ref{eqm}, if $M_{1}=M_{0}$ then IIE is zero because $M_{1}=M_{0}=M$.
\end{proof}

\subsection*{A case of a binary moderator}
We consider the case of a binary moderator. By Equation \ref{eq1}, the observed outcome $Y$ is expressed as
\begin{align}
\label{eq01}
Y &= T M Y_{1}(1) + T (1-M) Y_{1}(0) + (1-T) M Y_{0}(1) + (1-T)(1-M) Y_{0}(0) \notag \\
&= T M_1 Y_{1}(1) + T (1-M_1) Y_{1}(0) + (1-T) M_0 Y_{0}(1) + (1-T)(1-M_0) Y_{0}(0)
\end{align}
where $M_{1}(1)=M_1, M_{0}(1)=M_0, M_{1}(0)=1-M_1$, and $M_{0}(0)=1- M_0$.

Using Equation \ref{eq4},
\begin{align}
Y_1=Y_{1}(M_1) & =  M_1 Y_1(1) + (1-M_1) Y_1(0), \ \ Y_{1}(M_0) =  M_0 Y_1(1) + (1-M_0) Y_1(0) \notag \\
Y_{0}(M_1) & =  M_1 Y_0(1) + (1-M_1) Y_0(0), \ \ Y_0=Y_{0}(M_0) =  M_0 Y_0(1) + (1-M_0) Y_0(0),
\end{align}

For example, the potential outcome if the unit recieves $T=1$ and $M_0=1$ will be $Y_1(1)$.

Then, $ATE$, $WCDE$, and $IIE$ are expressed as follows:
\begin{align}
ATE & = E[ M_1 Y_{1}(1) - M_0 Y_{0}(1) + (1-M_1) Y_{1}(0)   - (1 - M_0) Y_{0}(0) ) ]  \notag \\
WCDE & = E[ M \times ICDE(1) + (1-M) \times ICDE(0) ]  \notag \\
& = E[ M ( Y_{1}(1) - Y_{0}(1) ) + (1-M) ( Y_{1}(0) - Y_{0}(0) ) ]  \notag \\
& = E[ (T M_1 + (1-T) M_0 ) ( Y_{1}(1) - Y_{0}(1) ) + (1- T M_1 - (1-T) M_0 ) ( Y_{1}(0) - Y_{0}(0) ) ]  \notag \\
& \neq NDE= \frac{1}{2} [NDE(1) + NDE(0)]  \notag \\
& = E[ \frac{M_1 + M_0}{2} ICDE(1) + \bigl( 1- \frac{M_1 + M_0}{2} \Bigr) ICDE(0) ]
\notag \\
IID & = E[ (M_1 -M) (Y_{1}(1) - Y_{1}(0)) + (M-M_0) (Y_{0}(1) - Y_{0}(0)) ] \notag \\
& \neq  NIE=\frac{1}{2} [NIE(1) + NIE(0)] =\frac{1}{2} E[ (M_1 - M_0) \{ (Y_{1}(0) - Y_{1}(0)) + (Y_{0}(0) - Y_{0}(0)) \}]
\end{align}
where $M(1)=M$ and $M(0)=1-M$.

It is easily shown that with randomization of $T$, the natural direct effect $NDE=\frac{1}{2} [NDE(1) + NDE(0)]$ evaluates the direct effect if the distribution of treatment is to be $p(T=1)=0.5$, which is different from the ``natural'' observed distribution.
From these equations. it is expected that the difference between WCDE and NDE will be larger when $P(T=1)$ deviates from $0.5$.
Note that in the above equations the expectation is taken over the population distribution of $Y_{j}(k) \ (j,k=0,1), M_{1},M_{0}$ and $T$. 
Moreover, as will be mentioned in the next section, $WCDE$ is estimable without Assumption 4 or SI2, while NDE is not.

\section{Identification and Estimation}

Instead of Assumptions 1-4 (or SI1 and SI2), we introduce the following mean independence versions of Assumptions 1-4, SI1 and SI2:
\begin{description}
\item[Assumption $1^{'}$] $E[Y_{j}(k)|T,v]=E[Y_{j}(k)|v] \ \forall j, k,$
\item[Assumption $2^{'}$] $E[M_{j}|T,v]=E[M_{j}|v] \ \forall j,$
\item[Assumption $3^{'}$] $E[Y_{j}(k)|M,T=j,v]=E[Y_{j}(k)|T=j,v] \ \forall j, k,$
\item[Assumption $4^{'}$] $E[Y_{j}(k)|M_{j^*},v]=E[Y_{j}(k)|v] \ \forall j, j^*, k,$
\item[Assumption $SI1^{'}$] $E[M_{j^*}Y_{j}(k)|T,v]=E[M_{j^*}Y_{j}(k)|v] \ \forall j, j^*, k,$
\item[Assumption $SI2^{'}$] $E[Y_{j}(k)|M_{j^*},T=j,v]=E[Y_{j}(k)|T=j,v] \ \forall j, j^*, k,$
\end{description}
Assumption $SI1^{'}$ implies the mean independence version of the ignorability assumption \citep{Rosenbaum:1983}, $E[Y_{j}|T,v]=E[Y_{j}|v] \ \forall j$, which is sufficient for identifying ATE.

For identification of $WCDE$, we consider the following two cases: Case 1, when both $M$ and $T$ are randomized or ignorable given $v$, and Case 2, when $M$ is not directly maipulable.

\subsection*{Case1: When both $M$ and $T$ can be randomized or ignorable given covariates}
In this case we can identify ATE, WCDE and IIE in the following ways:
\begin{description}
\item[Step 1] Divide the sample into two equivalent subgroups (usually by using randomization).
\item[Step 2] Randomize $T$ with $v$ given in the first group to obtain a consistent estimator of ATE, $\hat{E}[Y_1 -Y_0] $, and that of $p(M)=p(M_1|T=1)p(T=1) + p(M_0|T=0)p(T=0)$.
\item[Step 3] Randomize both $M$ and $T$ with $v$ given in the second group, in which the distributions of $T$ and $M$ are set to be equal to those in the first group to identify WCDE (and IID) by using Theorem~\ref{theorem2} below.
\end{description}
Note that in the first group by randomizing $T$, Assumptions $2^{'}$ and $SI1^{'}$ automatically hold, which is sufficient to identify $ATE$ and $p(M)$. In the second group, by randomizing both $T$ and $M$, Assumptions $1^{'}$ and  $3^{'}$ hold.
From the following theorem, using the data from the second group $WCDE$ and $IID=ATE-WCDE$ are identifiable without any additional assumptions such as mean independence between potential outcomes and potential mediators (i.e., Assumption $4^{`}$ or $SI2^{`}$).
\begin{theorem}
\label{theorem2}
WCDE is identifiable by observed data under Assumptions $1^{'}$ and $3^{'}$.
\end{theorem}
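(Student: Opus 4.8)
The plan is to reduce $WCDE$ to a weighted combination of the mediator-specific contrasts $E[ICDE(m)\mid v]=E[Y_1(m)-Y_0(m)\mid v]$, each of which I will match to an observable regression of $Y$ on $(T,M,v)$, and to recover the weights from the observed mediator distribution. Concretely, writing the expectation over $v$ outside, I would aim to show
\begin{align}
WCDE = E_v\Bigl[\,\sum_{m=0}^{M^*} p(M=m\mid v)\,\bigl(E[Y\mid T=1,M=m,v]-E[Y\mid T=0,M=m,v]\bigr)\Bigr],\notag
\end{align}
where every term on the right is a functional of the distribution of the observables $(Y,M,T,v)$.

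First I would establish the central building block: under Assumptions $1'$ and $3'$,
\begin{align}
E[Y\mid T=j,M=m,v]=E[Y_j(m)\mid v],\qquad j=0,1.\notag
\end{align}
By the consistency relation in Equation \ref{eq1}, on the event $\{T=j,M=m\}$ one has $Y=Y_j(m)$, so $E[Y\mid T=j,M=m,v]=E[Y_j(m)\mid T=j,M=m,v]$. Applying Assumption $3'$ with $k=m$ removes the conditioning on $M$, giving $E[Y_j(m)\mid T=j,v]$, and Assumption $1'$ then removes the conditioning on $T$, giving $E[Y_j(m)\mid v]$. Subtracting the $j=0$ and $j=1$ versions identifies $E[ICDE(m)\mid v]$ from observed data.

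Next I would assemble $WCDE$. Since $\sum_m M(m)Y_j(m)=Y_j(M)$, the estimand is the mediator-distribution-weighted average of the $ICDE(m)$; taking iterated expectations over $v$ and over the value of $M$, the weight attached to level $m$ is $p(M=m\mid v)$, which is directly estimable, and the inner quantity is exactly the $E[ICDE(m)\mid v]$ identified above. The feature that makes this go through — and the structural reason $WCDE$ avoids Assumption $4'$ while $NDE$ does not — is that $WCDE$ only ever contrasts $Y_1(m)$ against $Y_0(m)$ at a \emph{common} mediator level $m$, so it never invokes the cross-world quantity $Y_1(M_0)$; each $E[Y_j(m)\mid v]$ is pinned down purely by varying $T$ and setting $M$, i.e. by $1'$ and $3'$ alone.

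The step I expect to be the main obstacle is the weighting itself. A direct expansion of $E[M(m)Y_1(m)]$ splits, via $M=TM_1+(1-T)M_0$, into a term on $\{T=1\}$ that collapses to the observable $E[\mathbf{1}[M=m]\,Y\mid T=1,v]$ and a term on $\{T=0\}$ of the form $E[M_0(m)Y_1(m)\mid T=0,v]$; the latter entangles the treatment-$0$ potential mediator with the treatment-$1$ potential outcome and is \emph{not} identified by $1'$ and $3'$ without the untestable Assumption $4'$. This is precisely where the two-group design of Case 1 is needed: the contrasts $E[ICDE(m)\mid v]$ are read off from the arm in which $M$ is randomized, so that the set mediator is independent of the potential outcomes and Assumptions $1'$, $3'$ hold by construction, while the weights $p(M=m\mid v)$ are read off from the arm in which only $T$ is randomized and $M$ is left to its natural distribution. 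Once the argument is organized so that the weighting distribution and the controlled contrasts are supplied by these two separate sources, no cross-world term ever has to be identified, and the displayed identity for $WCDE$ follows.
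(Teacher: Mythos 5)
Your proposal is correct and follows essentially the same route as the paper's own proof: both reduce $WCDE$ to the formula $E_v\bigl[\sum_{m} p(M(m)=1|v)\,(E[Y|T=1,M=m,v]-E[Y|T=0,M=m,v])\bigr]$, and both rest on the same key lemma $E[Y|T=j,M=m,v]=E[Y_j(m)|v]$, proved exactly as you do (consistency, then Assumption $3'$ to drop the conditioning on $M$, then Assumption $1'$ to drop the conditioning on $T$). Where you differ is in the treatment of the weighting step, and there you are in fact more careful than the paper. The paper's proof asserts in one stroke that $E[M(m)\,ICDE(m)\,|\,v]=E_T\bigl[E[M(m)|T,v]\,E[ICDE(m)|T,v]\,\big|\,v\bigr]$ ``under these assumptions''; but this factorization requires the observed mediator to be mean-independent of \emph{both} $Y_1(m)$ and $Y_0(m)$ within each treatment arm, whereas Assumption $3'$ as literally stated only constrains $Y_j(\cdot)$ given $T=j$, leaving cross terms such as $E[M_1(m)Y_0(m)|T=1,v]$ (equivalently, your $E[M_0(m)Y_1(m)|T=0,v]$) unconstrained. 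Your resolution --- that these terms never need to be identified because the contrasts are read from the Case 1 arm in which $M$ is randomized, so independence from all potential outcomes holds by construction, while the weights $p(M=m|v)$ come from the arm with the natural mediator --- is precisely the role the two-group design plays in the paper, but the paper invokes it only outside the proof, in the remark that randomization makes Assumptions $1'$ and $3'$ hold. In short: same decomposition, same key lemma, with your version making explicit the point at which the assumptions, read literally, would not suffice and the experimental design does the work.
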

\begin{proof}[of Theorem~\ref{theorem2}]
Under these assumptions, the $WCDE$ is expressed as
\begin{align}
WCDE & =  E_v \Bigl[ \sum^{M^*}_{m=0} E[ M(m) ICDE(m)  |v]  \Bigr] = E_v \Bigl[ \sum^{M^*}_{m=0} E_T[ E[M(m)|T]E[ ICDE(m) |T] |v]  \Bigr] \notag \\
& = E_v \sum^{M^*}_{m=0} \Bigl[ p(T=1|v) p(M(m)=1|T=1,v) E[ Y_{1}(m) - Y_{0}(m)|T=1,v]  \notag \\
&+ p(T=0|v) p(M(m)=1|T=0,v) E[ Y_{1}(m) - Y_{0}(m)|T=0,v]  \Bigr] \notag \\
& = E_v \sum^{M^*}_{m=0} \Bigl[ p(M(m)=1|v) ( E[ Y_{1}(m)|v] - E[Y_{0}(m)|v] ) \Bigr] \notag \\
& = E_v \sum^{M^*}_{m=0} \Bigl[ p(M(m)=1|v) \times ( E[ Y_{1}(m)|T=1,M=m,v] - E[Y_{0}(m)|T=0,M=m,v] )  \Bigr]
\end{align}
Considering that $p(M)$, $E[ Y_{j}(k)|T=j,M=k,v]  \ (\forall j,k)$ are observable, the $WCDE$ is identifiable.
\end{proof}

It should be noted again that the identification of NDE and NIE requires Assumption $4^{'}$ or $SI2^{'}$ even after randomizing both $T$ and $M$ \citep{Pearl:2014}.

\subsection*{Case2: When $M$ is not directly manipulable}

If randomization of $M$ is not feasible, it is inevitable to accept some untestable assumptions to identify causal direct/indirect effects. As stated in Theorem 2, it is sufficient to assume Assumptions $1^{'}, 2^{'}, 3^{'}$ and $SI1^{'}$ hold given abundant covariates to identify WCDE and IID.
In this case, $M$ is not directly manipulatable, then Assumption $3^{'}$ (and the other assumptions when $T$ is also not manipulatable) is untestable, but as mentioned earlier, these assumptions are weaker than Assumption $4^{'}$ or $SI2^{'}$
.
\subsection*{Estimation}

For simplicity, we consider the case without covariates.
For Case 1, the estimator of WCDE is expressed by the observed quantities:
\begin{align}
\widehat{WCDE} = \sum^{M^*}_{m=0} \hat{p}(M(m)=1)(\bar{y}_{1|M=m} - \bar{y}_{0|M=m} )  
\end{align}
where $\hat{p}(M(m)=1)$ is the sample proportion with $M=m$ in the first group and $\bar{y}_{j|M=m}$ is the average of $y$ for units with $T=j$ and $M=m$ in the second group.
By simple application of the Delta method the asymptotic variance of $\widehat{WCDE}$ is expressed as
\begin{align}
\frac{1}{N} \Bigl\{  d^t (diag (p) - p p^t ) d + \sum^{M^*}_{m=0} p_m^2 [V(\bar{y}_{1|M=m}) + V(\bar{y}_{0|M=m}) ]  \Bigr\} 
\end{align}
where $p_{m}=P(M(m)=1)$, $p=(p_{0}, \cdots , p_{M^*})^t$, $d_m=E(y_{1}(m)) - E(y_{0}(m))$ and $d=(d_{0}, \cdots , d_{M^*})^t$.

The unbiased ATE is $\bar{y}_{1}- \bar{y}_{0}$, where $\bar{y}_{j}$ is the average of $y$ for units with $T=j$ in the whole sample, because even in the second group Assumptions $1^{'}$ and  $3^{'}$ hold, then the difference of averages of outcomes is an unbiased estimator of ATE in the second group.

For Case 2, under ignorability given covariate $v$, ATE and WCDE are expressed as:
\begin{align}
ATE &= E_v \Bigl[ E[Y|T=1,v] - E[Y|T=0,v] \} \Bigr] \notag \\
WCDE & = E_v \sum^{M^*}_{m=0} \Bigl[ P(M(m)=1|v) \{ E[Y|T=1,M=m,v] - E[Y|T=0,M=m,v] \}   \Bigr]
\end{align}
Then various methods such as inverse probability weighting estimator or doubly robust type estimator can be used to estimate $ATE$ and $WCDE$.

\subsection*{Hypothetical ratio adjustment for treatment}

As stated in the previous section, ``WCDE'' evaluates the direct effect with the observed proportion ($ \hat{p}(T=1)$) of the treatment group. If the researcher needs to consider the direct effect with a hypothetical proportion (say $p^*$) of the treatment group, use a weight of $\frac{p^*}{\hat{p}(T=1)}$ for treatment individual and use a weight of $\frac{1-p^*}{1-\hat{p}(T=1)}$  for control individual.

\subsection*{Generalization}

We can address the case where $m$ is continuous. Under continuous mediator $m$, WCDE is expressed as follows:
\begin{align}
WCDE = \int \int \{ E[Y_{1}(m)|m,v] - E[Y_{0}(m)|m,v] \} p(m|v) p(v) d m d v 
\end{align}
Under Assumptions $1^{'}$ and $3^{'}$, WCDE is expressed as the following quantities identifiable by observed data:
\begin{align}
WCDE = \int \int \{ E[Y_{1}(m)|T=1,m,v] - E[Y_{0}(m)|T=0,m,v]  \} p(m|v) p(v) d m d v
\end{align}
ATE is identifiable by Assumption $SI1^{'}$, then IID is also identified as $IID=ATE-WCDE$.

\section{Illustrative simulation}

For illustrative purposes, we present a simulation study that compares the defined effects with the previously proposed ones.
We numerically evaluated bias from the true values (population WCDE for the proposed method in Case 1 and population NDE for the existing method with assumptions SI1 and SI2).
We consider the data-generating model in which assumption SI2 in Section 1 can be violated.

For simplicity, we consider binary mediator $M$, and define two latent continuous potential mediators $M_0^L$ and $M_1^L$ so that $M_j=1$ if $M_j^L > 0$ otherwise $M_j=0 \ \ (j=0,1)$.
We generated 10,000 samples of size $n=4,000$ from the joint vector of potential mediators and potential , $W=(M_0^L, M_1^L, Y_{0}(0), Y_{0}(1), Y_{1}(0), Y_{1}(1))^t$ which follows a finite scale mixture of multivariate-normal distributions;\begin{align}
W \sim  0.6 \times N(\mu, \Sigma_1) + 0.4 \times N(\mu,\Sigma_2)
\end{align}
where $\mu=(-1,1,0,0.2,0.6,1)$, $\Sigma_1=\Sigma$, $\Sigma_2 = 2 \times \Sigma$ and the diagonal elements of $\Sigma$ is set to be $1$. Off-diagonal elements are set such that $Cov(Y_{j}(k)),Y_{l}(m))=0.5$ and $Cov(M_0,M_1)=0.6$.
For simplicity, the covariances between potential mediators and potential outcomes are set as follows: $Cov(M_0, Y_{0}(0))=Cov(M_0, Y_{1}(0)=Cov(M_1, Y_{0}(1))=Cov(M_1, Y_{1}(1))=\phi$, and $Cov(M_0, Y_{0}(1))=Cov(M_0, Y_{1}(1))=Cov(M_1, Y_{0}(0))=Cov(M_1 Y_{1}(0))=-\phi$. In this setup, $p(M_1=1) \approx 0.809$ and $p(M_0=1) \approx 0.191$.

The treatment indicator $T$ is generated from the Bernoulli distribution with $Pr(T=1)= p$, independently of $W$. 
We consider four cases with $p=0.01$, $0.1$, $0.3$ and $0.5$ and in each case $\phi$ varies from $-0.15$ to $0.15$ in $0.05$ increments.
Note that for all the models assumption 4 or SI2 is violated, except  $\phi=0$.

The true values of $WCDE$ and $NDE$ of the model is difficult to calculate analytically. We then evaluate these values using the simulated 10,000 datasets.

We compare the proposed estimatior for $WCDE$ with the existing estimator of $NDE$ implied by Equation 18 in \citet{Imai:2010b}, which is frequently used in applied studies.
The results are shown in Table 1 and those with $p=0.5$ are illustrated in Figure 1, in which the horizontal axis is $\phi$ and the vertical axis is the bias from the true values.

As shown in this figure and Table1, the bias of the previously proposed estimator can be large for large deviations from Assumption SI2 (i.e., large $\phi$) as mentioned in the sensitivity analysis in \citet{Imai:2010b}, while the proposed method can find true values on average.
The tendency of the size of bias does not change according to $p$, the proportion of $T=1$, but in setups with small $p$ the variance of the two estimators is large because the sample size with $T=1$ is very small.

Table 1 shows the numerically evaluated population $ATE$, $WCDE$ and $NDE$ in each setup. The difference between $WCDE$ and $NDE$ for small $p$ exists but not large, while as mentioned in Section 3 the difference is negligible if $p(T=1)=0.5$.

It is also shown that the bias of the existing estimator can be very large compared with the size of true value of the causal mediation effect.
In particular, as seen in the setup with $\phi=-0,15$ where $ATE$ is almost the same as $NDE$ (i.e., $NIE$ is almost zero), the existing method wrongly finds a ``(pseudo-)mediation'' effect under the setup when the true model does not contain a mediation effect, but the mediator and potential outcomes are correlated.

\begin{center}
Tab.1. \ \ Simulation results\label{tab1}
\includegraphics[width=14cm]{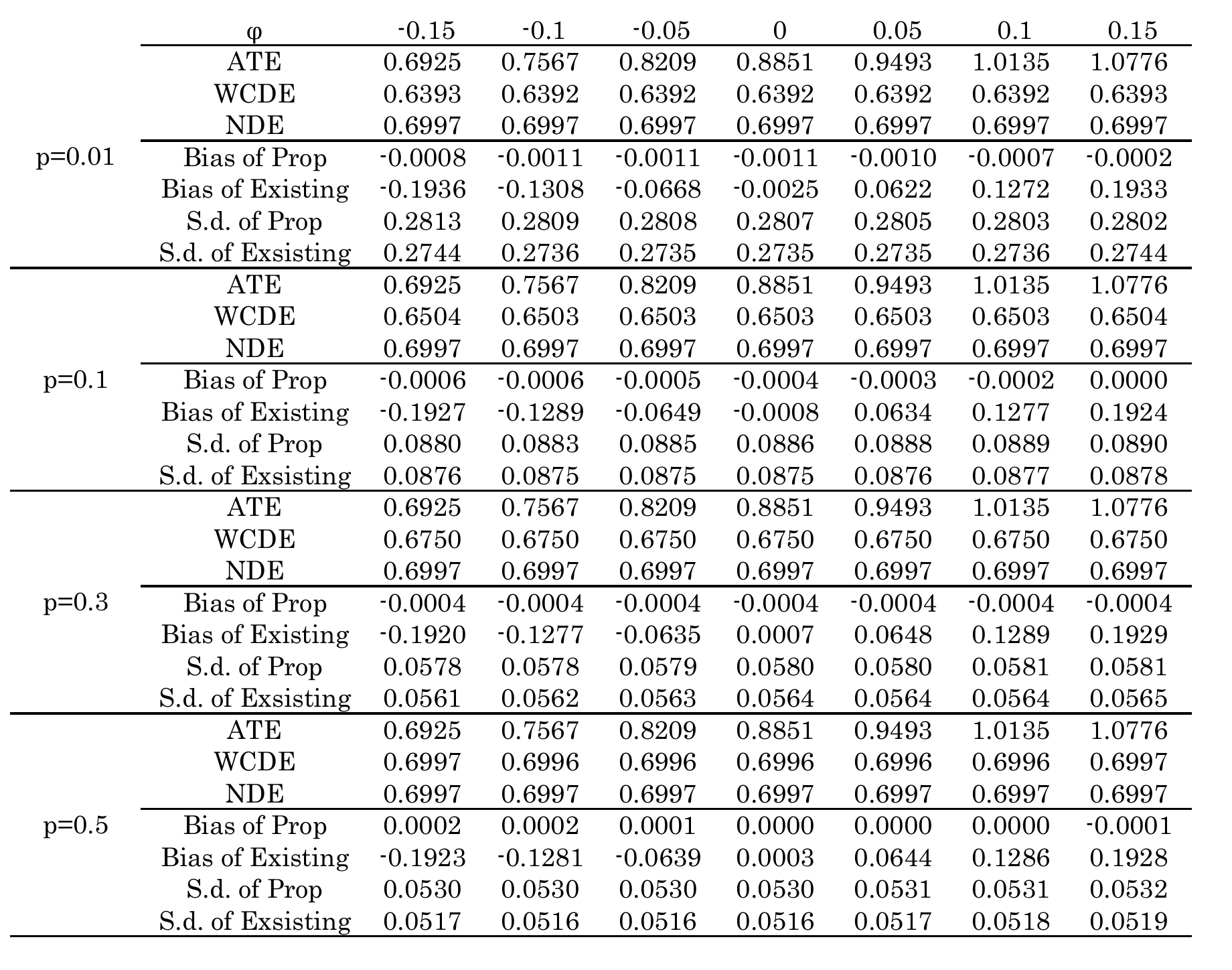}
\end{center}

\begin{figure}
\begin{center}
\caption{Result for $p=0.5$}\label{fig1}
\includegraphics[width=12cm]{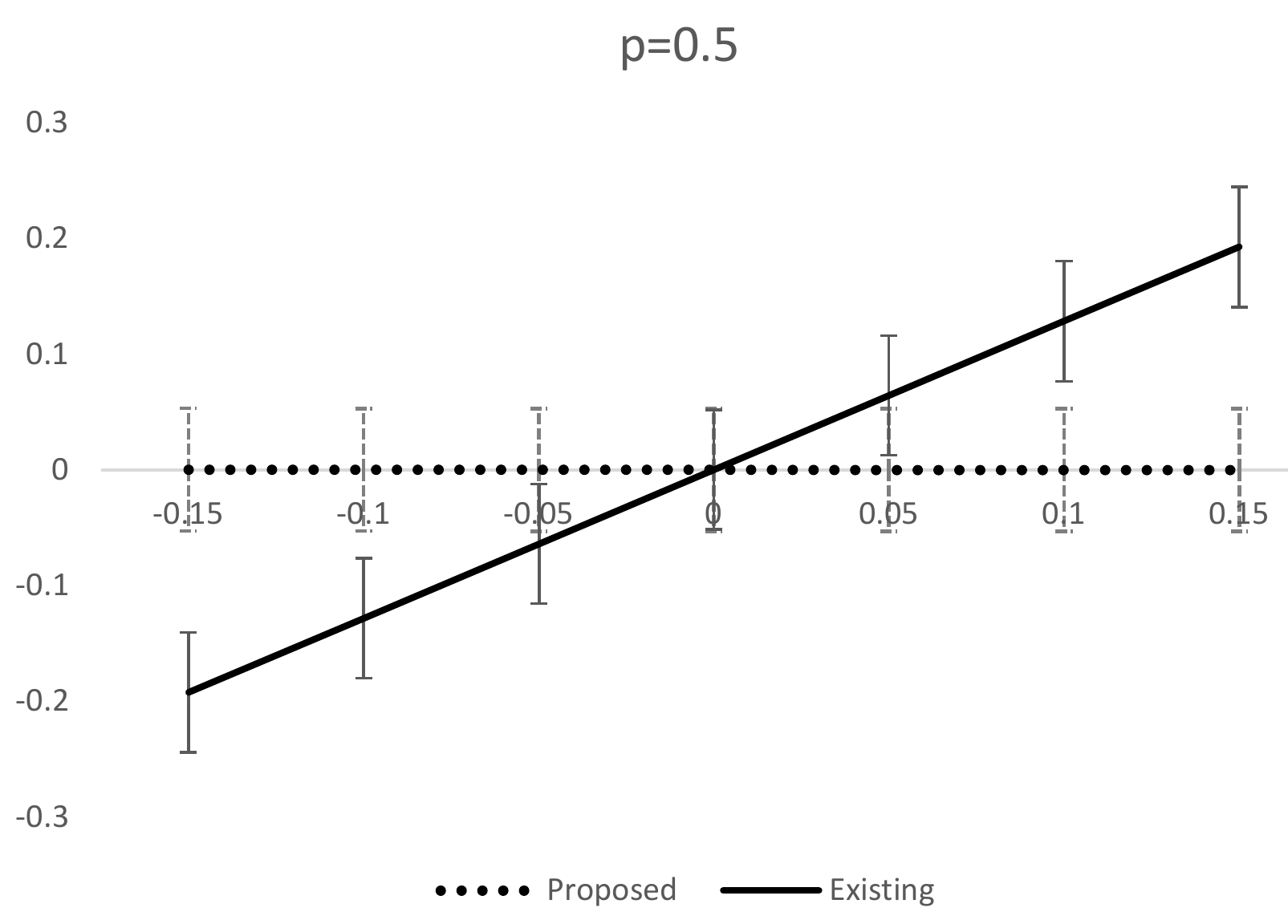}
\end{center}
The error bar indicates one standard deviation calculated from the 10,000 estimates.
\end{figure}

\section{Discussion}

In this paper, we proposed a new definition of causal direct and indirect effects in causal mediation analysis.

Identification of the previously proposed quantities, natural direct effect, and natural indirect effect is not possible even when both treatment and mediator are randomized. Therefore, it is unavoidable to employ untestable assumption of the independence of potential outcomes and potential mediators, some of which we never observe simultaneously.

The proposed quantities are identifiable without any assumption when both treatment and mediator are randomized. Even when randomization is not possible, the proposed quantities require weaker assumptions than those for the identification of traditional quantities.

When it is difficult to directly manipulate $M$, Assumption $3^{'}$ is required for identification of the proposed effects. Another approach for identification such as principal stratification approach \citep{Frangakis:2002,Forastiere:2018} is an promising strategy that we will investigate in a future study.

In this paper, we focused on the case with binary treatment, but the definition of WCDE is useful for the case with multi-valued treatment. In multi-valued treatment, the measure of indirect effect should be defined in terms of the sum of squares or variance of the expectations, instead of the traditional ``difference of the expectations'' formulation in the case of  binary treatment, which will be considered elsewhere.


\end{document}